\newcommand{\remove}[1]{}
\renewcommand{\int}{int}
 \definecolor{blue}{rgb}{0,0,255}
 \definecolor{red}{rgb}{255,0,0}
 \definecolor{green}{rgb}{0,255,0}
\newcommand{\Oh}[1]{\ensuremath{\protect\mathcal{O}(#1)}}
\newcommand{\etal}{\emph{et al.}}
\renewenvironment{proof}
{{\bf Proof:}}{\hspace*{\fill}$\Box$\par\vspace{2mm}}
\begin{document}
\title{Stack and Queue Layouts via \\Layered Separators\thanks{The research of Vida Dujmovi\'c was partially supported by 
NSERC,  and Ontario Ministry of Research and Innovation. The research of Fabrizio Frati was partially supported by MIUR Project ``AMANDA'' under PRIN 2012C4E3KT.}
}
\author{Vida Dujmovi\'c \inst{1}, Fabrizio Frati\inst{2}}
\institute{1. School of Computer Science and Electrical Engineering, Univ. of Ottawa, Canada\\
\email{vida.dujmovic@uottawa.ca}\\
2. Dipartimento di Ingegneria, Roma Tre Univ., Italy\\
\email{frati@dia.uniroma3.it}\\
}
\maketitle

\begin{abstract}
It is known that every proper minor-closed class of graphs has bounded stack-number (a.k.a.\ book thickness and page number).  While this includes notable graph families such as planar graphs and graphs of bounded genus, many other graph families are not closed under taking minors. For fixed $g$ and $k$, we show that every $n$-vertex graph that can be embedded on a surface of genus $g$ with at most $k$ crossings per edge has stack-number \Oh{\log n}; this includes $k$-planar graphs. The previously best known bound for the stack-number of these families was \Oh{\sqrt{n}}, except in the case of $1$-planar graphs. Analogous results are proved for map graphs that can be embedded on a surface of fixed genus. None of these families is closed under taking minors. The main ingredient in the proof of these results is a construction proving that $n$-vertex graphs that admit constant layered separators have \Oh{\log n} stack-number.
\end{abstract}

\section{Introduction} \label{se:introduction}

A \emph{stack layout} of a graph $G$ consists of a total order $\sigma$ of $V(G)$ and a partition of $E(G)$ into  sets (called \emph{stacks}) such that no two edges in the same stack \emph{cross}; that is, there are no edges $vw$ and $xy$ in a single stack with $v<_\sigma x<_\sigma w<_\sigma y$.  The minimum number of stacks in a stack layout of $G$  is the \emph{stack-number} of $G$. Stack layouts, first defined by Ollmann~\cite{Ollmann73}, are ubiquitous structures with a variety of applications (see \cite{DujWoo-DMTCS04} for a survey). A stack layout is also called a {\em book embedding} and stack-number is also called {\em book thickness} and {\em page number}. The stack-number is known to be bounded for planar graphs \cite{Yannakakis89}, bounded genus graphs \cite{Malitz94b} and, most generally, all proper minor-closed graph families \cite{Blankenship-PhD03,BO01}. 

The purpose of this note is to bring the study of the stack-number beyond the proper minor-closed graph families. 
\emph{Layered separators} are a key tool for proving our results. They have already led to progress on long-standing open problems related to 3D graph drawings~\cite{Duj15,DBLP:conf/focs/DujmovicMW13} and nonrepetitive graph colourings~\cite{DBLP:journals/combinatorics/DujmovicJFW13}.
A {\em layering} $\{V_0,\dots,V_p\}$ of a graph $G$ is a partition of $V(G)$ into {\em layers} $V_i$ such that, for each $e\in E(G)$, there is an $i$ such that the endpoints of $e$ are both in $V_i$ or one in $V_i$ and one in $V_{i+1}$. A graph $G$ has a {\em layered $\ell$-separator} for a fixed layering $\{V_0,\dots,V_p\}$ if, for every subgraph $G'$ of $G$, there exists  a set $S\subseteq V(G')$ with at most $\ell$ vertices in each layer (i.e., $V_i\cap S\leq \ell$, for $i=0,\dots,p$) such that each connected component of $G'-S$ has at most $|V(G')|/2$ vertices. Our main technical contribution is the following theorem. 

\begin{theorem} \label{th:main-sn}
Every $n$-vertex graph that has a layered $\ell$-separator has stack-number at most  $5\ell \cdot \log_2 n$.
\end{theorem}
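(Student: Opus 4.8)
The plan is to apply the layered $\ell$-separator recursively to $G$, build a decomposition tree, read a vertex order off it, partition the edges into $O(\log n)$ classes according to the tree, and draw each class in $5\ell$ stacks. First I would build a rooted tree $T$: its root corresponds to $G$, and for a node $t$ with associated subgraph $G_t$, if $|V(G_t)|\ge 2$ I pick a layered $\ell$-separator $S_t$ of $G_t$ (it exists since $G_t$ is a subgraph of $G$ and I keep the layering $\{V_0,\dots,V_p\}$) and let the children of $t$ correspond to the connected components of $G_t-S_t$; otherwise $t$ is a leaf. The sets $S_t$ partition $V(G)$; write $t(v)$ for the node with $v\in S_{t(v)}$ and $\mathrm{lev}(v)$ for its depth. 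Since every component has at most half the vertices of its parent, $T$ has depth at most $\log_2 n$, so only $\le\log_2 n$ levels carry edges. I would record two facts: (1) for every edge $uv$ the nodes $t(u),t(v)$ are comparable in $T$ — otherwise $u,v$ lie in distinct components of $G_a-S_a$ for $a=\mathrm{lca}(t(u),t(v))$, contradicting adjacency — hence, taking $t(u)$ to be the weak ancestor, both $u,v\in V(G_{t(u)})$; and (2) each $G_t$ is connected, so $V(G_t)$ meets a contiguous range of layers.

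Next I would define the vertex order $\sigma$ recursively on $T$: $\sigma$ restricted to $V(G_t)$ first lists $S_t$ in non-decreasing layer order (vertices of $S_t$ in one layer in arbitrary relative order), and then lists $\sigma(G_{c_1}),\dots,\sigma(G_{c_m})$ where $c_1,\dots,c_m$ are the children of $t$ sorted by \emph{decreasing} smallest-layer. The intent is that, scanning $\sigma(G_t)$, the layer of the current vertex is essentially unimodal: it climbs through $S_t$ to a peak near the boundary between $S_t$ and $G_{c_1}$, then trends down through the children-blocks. Two immediate consequences: $V(G_t)$ is an \emph{interval} of $\sigma$ for every $t$, and for $t\ne t'$ at the same depth the intervals of $V(G_t)$ and $V(G_{t'})$ are disjoint, since they sit in disjoint children-blocks of their least common ancestor.

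Now assign each edge $uv$ to level $j:=\min\{\mathrm{lev}(u),\mathrm{lev}(v)\}$. By fact (1) both endpoints lie in the interval of $V(G_t)$ for the depth-$j$ node $t=t(\text{shallower endpoint})$, and edges assigned to distinct depth-$j$ nodes lie in disjoint intervals, hence cannot cross; so it suffices to embed, inside the interval of a single $V(G_t)$, the level-$j$ edges assigned to $t$ in $5\ell$ stacks, and then reuse these $5\ell$ stacks across all depth-$j$ nodes and finally across all $\le\log_2 n$ levels, giving at most $5\ell\log_2 n$ stacks. The edges assigned to $t$ are of two kinds, and in both the endpoints' layers differ by at most $1$. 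Kind (i), edges inside $S_t$: since $S_t$ comes first in non-decreasing layer order with $\le\ell$ vertices per layer, each such edge spans $\le 2\ell$ consecutive positions, so a residue-class (bandwidth) argument uses $\le 2\ell$ stacks. Kind (ii), edges from $S_t$ into a children-block: an edge from a layer-$i$ vertex of $S_t$ lands in a child whose layer range meets $\{i-1,i,i+1\}$, and by the decreasing-smallest-layer ordering such a child sits just past the peak at roughly layer $i$, so that for varying $i$ these edges are essentially nested around the peak; combined with the bound of $\le\ell$ separator vertices per layer, a Mirsky/longest-increasing-chain estimate on the crossing poset of these edges bounds any pairwise-crossing family by $O(\ell)$, costing the remaining $\le 3\ell$ stacks.

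The main obstacle is kind (ii): the children-blocks can be large, so an $S_t$-to-child edge may jump over many vertices, and one must use the unimodal layer profile carefully to prove that at most $O(\ell)$ such edges can pairwise cross. In particular, a ``wide'' child whose layer range strictly contains layer $i$ is placed well past the peak, so its edges to layer-$i$ vertices of $S_t$ are longer than the cleanly nested ones, and bounding how many of these can mutually cross is where the constant hidden in ``$5$'' really comes from. Making kinds (i) and (ii) controllable \emph{simultaneously} is the delicate point: contiguity of the children-blocks, which the disjoint-interval argument needs, pulls against a fully layer-sorted order, which would instead make all edges short; the recursive ``separator first in layer order, then children in decreasing smallest-layer'' rule is the compromise whose correctness must be verified.
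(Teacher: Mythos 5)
Your ordering is genuinely different from the paper's, and the difference is exactly where the argument breaks. You make each subgraph $V(G_t)$ a contiguous interval of $\sigma$ (separator first in layer order, then the children as blocks), and you yourself flag the problematic step: edges of kind (ii), from $S_t$ into a child block. The claim that a ``Mirsky/longest-increasing-chain estimate'' bounds every pairwise-crossing family of such edges by $O(\ell)$ is not proved, and it is false for the construction as stated. Concrete example: let $G$ be the $2\times m$ ladder with layers given by columns, $a_i,b_i$ in layer $i$, rungs $a_ib_i$ and rails $a_ia_{i+1}$, $b_ib_{i+1}$; this graph has a layered $\ell$-separator for a small constant $\ell$. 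At the root you may legitimately pick $S_t=\{a_1,\dots,a_m\}$ (one vertex per layer, and the unique component $B=b_1\cdots b_m$ has exactly half the vertices), and inside $B$ you may legitimately pick as its separator all $b_i$ with $i$ odd (one per layer, singleton components). Your order then reads $a_1,\dots,a_m,b_1,b_3,b_5,\dots$, so the rungs $a_ib_i$ with $i$ odd --- all assigned to level $0$, since their shallower endpoint lies in the root separator --- satisfy $a_i<a_j<b_i<b_j$ for odd $i<j$ and form a pairwise-crossing family of size $m/2=\Omega(n)$. No assignment can place these in $5\ell$ stacks, so the ``$5\ell$ stacks per level, reused across levels'' accounting collapses. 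The root cause is structural, not a matter of choosing separators more carefully: once a child occupies one contiguous block whose internal order is recursively scrambled relative to the layering, the layer-sorted separator and the neighbours inside the block can be ordered ``in parallel,'' producing twists far larger than $O(\ell)$; even balanced choices give super-constant twists.

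The paper avoids this by never making the components contiguous. Its order is layer-major: all of layer $V_i$ precedes $V_{i+1}$, and within each layer the separator vertices come first, followed by the pieces of $G_1,\dots,G_k$ in that order on even layers and in reversed order on odd layers. With this snake-like alternation, inter-layer edges between distinct components nest rather than cross, intra-layer edges of distinct components are separated, and every edge incident to the separator can be assigned to a stack indexed by its separator endpoint's position in $\rho_i$ (edges sharing an endpoint never cross), which costs only $\ell$ stacks for intra-layer edges plus $2\ell$ each for even and odd inter-layer edges, i.e.\ $5\ell$ per recursion level and $5\ell\log_2 n$ in total. If you want to salvage your interval-based scheme, you would need an additional mechanism forcing the neighbours of $S_t$ inside each child block to appear in an order compatible (reversed/nested) with the layer order of $S_t$ at every level simultaneously, which is essentially what the paper's alternating layer-by-layer order achieves directly.
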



We discuss the implications of Theorem~\ref{th:main-sn} for two well-known non-minor-closed classes of graphs.  A graph is \emph{$(g,k)$-planar} if it can be drawn on a surface of Euler genus at most $g$ with at most $k$ crossings per edge. Then $(0,0)$-planar graphs are {\em planar graphs}, whose stack-number is at most $4$~\cite{Yannakakis89}. Further, $(0,k)$-planar graphs are {\em $k$-planar graphs}~\cite{PachToth-Comb97}; Bekos~\etal~\cite{DBLP:conf/esa/BekosB0R15} have recently proved that $1$-planar graphs have bounded stack-number (see Alam~\etal~\cite{oneplanar-2} for an improved constant). The family of $(g,k)$-planar graphs is not closed under taking minors\footnote{The $n\times n\times 2$ grid graph is a well-known example of $1$-planar graph with an arbitrarily large complete graph minor. Indeed, contracting the $i$-th row in the front $n\times n$ grid with the $i$-th column in the back $n\times n$ grid, for $1\leq i\leq n$, gives a $K_n$ minor.} even for $g=0$, $k=1$; thus the result of Blankenship and Oporowski \cite{Blankenship-PhD03,BO01}, stating that proper minor-closed graph families have bounded stack-number, does not apply to $(g,k)$-planar graphs. Dujmovi\'c~\etal~\cite{DBLP:conf/gd/DujmovicEW15} showed that $(g,k)$-planar graphs have layered $(4g+6)(k+1)$-separators\footnote{More precisely, Dujmovi\'c~\etal~\cite{DBLP:conf/gd/DujmovicEW15} proved that $(g,k)$-planar graphs have {\em layered treewidth} at most $(4g+6)(k+1)$ and $(g,d)$-map graphs have layered treewidth at most $(2g+3)(2d+1)$. Just as the graphs of treewidth $t$ have (classical) separators of size $t-1$, so do the graphs of layered treewidth $\ell$ have layered $\ell$-separators \cite{DBLP:conf/focs/DujmovicMW13,DMW-2014}.}. This and our Theorem~\ref{th:main-sn} imply the following corollary. For all $g\geq 0$ and $k\geq 2$, the previously best known bound was \Oh{\sqrt{n}}, following from the \Oh{\sqrt{m}} bound for $m$-edge graphs \cite{Malitz94a}.

\begin{corollary} \label{cor:gk}
For any fixed $g$ and $k$, every $n$-vertex $(g,k)$-planar graph has stack-number \Oh{\log n}.
\end{corollary}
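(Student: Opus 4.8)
The plan is to obtain Corollary~\ref{cor:gk} by feeding a known structural decomposition of $(g,k)$-planar graphs directly into Theorem~\ref{th:main-sn}. Recall that a graph is $(g,k)$-planar if it admits a drawing on a surface of Euler genus at most $g$ in which every edge is crossed at most $k$ times. First I would invoke the result of Dujmovi\'c~\etal~\cite{DBLP:conf/gd/DujmovicEW15}, quoted in the excerpt, that every $(g,k)$-planar graph $G$ has layered treewidth at most $(4g+6)(k+1)$; that is, there is a layering $\{V_0,\dots,V_p\}$ of $G$ together with a tree decomposition of $G$ in which every bag meets every layer in at most $(4g+6)(k+1)$ vertices. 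Then I would use the standard fact (also noted in the excerpt, see \cite{DBLP:conf/focs/DujmovicMW13,DMW-2014}) that a graph of layered treewidth $\ell$ admits, for that same layering, a layered $\ell$-separator: for any subgraph $G'$, restricting the tree decomposition to $G'$ and taking a centroid bag yields a separator $S$ that hits each layer in at most $\ell$ vertices and leaves every connected component of $G'-S$ with at most $|V(G')|/2$ vertices. Hence $G$ has a layered $\ell$-separator with $\ell=(4g+6)(k+1)$.

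With this in hand, Theorem~\ref{th:main-sn} applies verbatim and yields that every $n$-vertex $(g,k)$-planar graph has stack-number at most $5(4g+6)(k+1)\log_2 n$. Since $g$ and $k$ are fixed, $5(4g+6)(k+1)$ is a constant, so this bound is \Oh{\log n}, as claimed; this improves the previously best known \Oh{\sqrt n} bound that followed from the \Oh{\sqrt m} bound on $m$-edge graphs together with the fact that $(g,k)$-planar graphs have $O(n)$ edges.

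The only point requiring a little care is the bookkeeping about \emph{which} layering is used: Theorem~\ref{th:main-sn} needs a single layering for which the $\ell$-separator property holds for all subgraphs simultaneously, and one must check that the layering produced by the layered-treewidth result of \cite{DBLP:conf/gd/DujmovicEW15} does the job — it does, since the centroid-bag argument above works uniformly over all subgraphs with the inherited layering and bag restrictions. Beyond that there is no real obstacle: all of the combinatorial difficulty has been isolated in Theorem~\ref{th:main-sn} and in the cited layered-treewidth construction, so the proof of the corollary is essentially the substitution $\ell \mapsto (4g+6)(k+1)$.
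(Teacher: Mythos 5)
Your proposal is correct and follows exactly the paper's route: it invokes the layered-treewidth bound of Dujmovi\'c~\etal~\cite{DBLP:conf/gd/DujmovicEW15} for $(g,k)$-planar graphs, converts it to a layered $(4g+6)(k+1)$-separator, and plugs this into Theorem~\ref{th:main-sn} to get the $\Oh{\log n}$ bound. Nothing further is needed.
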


A $(g,d)$-map graph $G$ is defined as follows. Embed a graph $H$ on a surface of Euler genus $g$ and label some of its faces as ``nations'' so that any vertex of $H$ is incident to at most $d$ nations; then the vertices of $G$ are the faces of $H$ labeled as nations and the edges of $G$ connect nations that share a vertex of $H$. The $(0,d)$-map graphs are the well-known \emph{$d$-map graphs}~\cite{FLS-SODA12,Chen-JGT07,DFHT05,CGP02,Chen01}. The $(g,3)$-map graphs are the graphs of Euler genus at most $g$~\cite{CGP02}, thus they are closed under taking minors. However, for every $g\geq 0$ and $d\geq 4$, the $(g,d)$-map graphs are not closed under taking minors \cite{DBLP:conf/gd/DujmovicEW15}, thus  the result of Blankenship and Oporowski \cite{Blankenship-PhD03,BO01} does not apply to them. The $(g,d)$-map graphs have layered $(2g+3)(2d+1)$-separators \cite{DBLP:conf/gd/DujmovicEW15}. This and our Theorem~\ref{th:main-sn} imply the following corollary. For all $g\geq 0$ and $d\geq 4$, the best previously known bound was \Oh{\sqrt{n}}~\cite{Malitz94a}.

\begin{corollary} \label{cor:gd}
For any fixed $g$ and $d$, every $n$-vertex $(g,d)$-map graph has stack-number \Oh{\log n}.
\end{corollary}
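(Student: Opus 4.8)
The plan is to derive the corollary by composing two ingredients already at hand: the structural result that $(g,d)$-map graphs admit layered separators of bounded width, and our main technical contribution, Theorem~\ref{th:main-sn}, which converts such a separator into a stack layout. No new construction is required; the proof amounts to identifying the separator width for this graph class and substituting it into the stack-number bound of Theorem~\ref{th:main-sn}.

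Concretely, I would proceed in three steps. First I would invoke the result of Dujmovi\'c~\etal~\cite{DBLP:conf/gd/DujmovicEW15}, recalled above, that every $(g,d)$-map graph $G$ has a layered $\ell$-separator with $\ell=(2g+3)(2d+1)$; that is, there is a layering $\{V_0,\dots,V_p\}$ of $V(G)$ such that every subgraph of $G$ has a separator meeting each layer in at most $\ell$ vertices and splitting the subgraph into components of at most half its order. Second I would observe that, for fixed $g$ and $d$, the quantity $\ell=(2g+3)(2d+1)$ is a constant independent of $n$. Third I would apply Theorem~\ref{th:main-sn} to $G$ with this value of $\ell$, obtaining a stack layout of $G$ with at most $5\ell\cdot\log_2 n=5(2g+3)(2d+1)\log_2 n$ stacks; since $\ell$ is constant, this is \Oh{\log n}, which establishes the corollary.

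The only point requiring care — rather than a genuine obstacle — is to confirm that the hypothesis of Theorem~\ref{th:main-sn} is met verbatim: the theorem asks for a graph possessing a layered $\ell$-separator with respect to some fixed layering, and this is exactly what the cited construction supplies for $(g,d)$-map graphs. All the substantive work is therefore absorbed into the two results already available, namely the layered-separator bound from \cite{DBLP:conf/gd/DujmovicEW15} and Theorem~\ref{th:main-sn} itself. I would not expect any case analysis or additional machinery; the only details worth recording are the explicit constant $5(2g+3)(2d+1)$ hidden in the \Oh{\log n} notation, and the fact that the argument is uniform over all $(g,d)$-map graphs, since the separator width depends only on the fixed parameters $g$ and $d$ and not on $n$ or on the particular graph.
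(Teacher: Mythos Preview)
Your proposal is correct and matches the paper's own argument exactly: the corollary is obtained simply by combining the layered $(2g+3)(2d+1)$-separator result for $(g,d)$-map graphs from~\cite{DBLP:conf/gd/DujmovicEW15} with Theorem~\ref{th:main-sn}. The paper does not give any additional proof beyond this one-line composition, so there is nothing further to compare.
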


A ``dual'' concept to that of stack layouts are queue layouts. A \emph{queue layout} of a graph $G$ consists of a total order $\sigma$ of $V(G)$ and a partition of $E(G)$ into sets (called \emph{queues}), such that no two edges in the same queue \emph{nest}; that is, there are no edges $vw$ and $xy$ in a single queue with $v<_\sigma x<_\sigma y<_\sigma w$. If $v<_\sigma x<_\sigma y<_\sigma w$ we say that $xy$ \emph{nests inside} $vw$. The minimum number of queues in a queue layout of $G$ is called the \emph{queue-number} of $G$. Queue layouts, like stack layouts, have been extensively studied. In particular, it is a long standing open problem to determine if planar graphs have bounded queue-number. Logarithmic upper bounds have been obtained via layered separators~\cite{Duj15,layer-pw}. In particular, a result similar to Theorem~\ref{th:main-sn} is known for the queue-number: Every $n$-vertex graph that has layered $\ell$-separators has queue-number \Oh{\ell\log n}~\cite{Duj15}; this bound was refined to $3\ell \cdot \log_3(2n + 1)-1$ by Bannister~\etal~\cite{layer-pw}.
 These results were established via a connection with the {\em track-number} of a graph~\cite{DMW-SJC05}. Together with the fact that planar graphs have layered $2$-separators~\cite{DBLP:journals/combinatorics/DujmovicJFW13,LT-SJAM79}, these results imply an \Oh{\log n} bound for the queue-number of planar graphs, improving on a earlier result by Di Battista~\etal~\cite{SCICOMP-BattistaFP13}. The polylog bound on the queue-number of planar graphs extends to all proper minor-closed families of graphs ~\cite{DBLP:conf/focs/DujmovicMW13,DMW-2014}. Our approach to prove Theorem~\ref{th:main-sn} also gives a new proof of the following result (without using track layouts). We include it for completeness.

\begin{theorem} \label{th:main-qn}
Every $n$-vertex graph that has a layered $\ell$-separator has queue-number at most  $3\ell \cdot \log_2 n$. 
\end{theorem}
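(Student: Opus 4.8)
The plan is to produce a single vertex order $\sigma$ of $G$ under which no family of pairwise nesting edges (a \emph{rainbow}) has more than $3\ell\log_2 n$ edges. This suffices: once $\sigma$ is fixed, nesting is a partial order on $E(G)$ whose antichains are exactly the sets of edges that may form one queue, so by Mirsky's theorem the minimum number of queues in a layout with vertex order $\sigma$ equals the size of a largest rainbow. To build $\sigma$, assume $G$ is connected (components are handled separately) and iterate the separator: put $G$ at the root of a tree $T$; given a subgraph $H$ sitting at a node $x$, take a layered $\ell$-separator $S_x$ of $H$ and make the connected components of $H-S_x$ the children of $x$, each of size at most $|V(H)|/2$. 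After at most $\log_2 n$ rounds every component is trivial, so $T$ has depth at most $\log_2 n$, and each $v\in V(G)$ lies in the separator $S_{x_v}$ of a unique node $x_v\in T$; write $d(v)$ for the depth of $x_v$ and $\lambda(v)$ for the index with $v\in V_{\lambda(v)}$. Two facts are used throughout: (i) if $vw\in E(G)$ then $x_v$ and $x_w$ are comparable in $T$, because a separator vertex of a node is adjacent only to vertices of that node or of the subtrees below it; and (ii) $|\lambda(v)-\lambda(w)|\le 1$ for every $vw\in E(G)$, while every separator meets every layer in at most $\ell$ vertices.

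Order $V(G)$ lexicographically by the key $(\lambda(v),\,\mathrm{dfs}(x_v),\,\ast)$, where $\mathrm{dfs}(x_v)$ is the position of $x_v$ in a fixed preorder traversal of $T$ (each node preceding all subtrees below it) and $\ast$ breaks remaining ties arbitrarily. The first observation is that every rainbow of $\sigma$ lies within two consecutive layers: if $e_1\supset e_2\supset\cdots$ with $e_1=ab$ and $a<_\sigma b$, then by (ii) and the layer-first rule $\lambda(b)\in\{\lambda(a),\lambda(a)+1\}$, and every vertex strictly between $a$ and $b$ in $\sigma$ --- hence every endpoint of every $e_j$ --- lies in $V_{\lambda(a)}\cup V_{\lambda(a)+1}$. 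So it is enough to bound the largest rainbow of the induced order on $G[V_i\cup V_{i+1}]$, which is the block $V_i$ sorted by $\mathrm{dfs}(x_v)$ followed by the block $V_{i+1}$ sorted by $\mathrm{dfs}(x_v)$.

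Fix such a rainbow $R$ and split its edges according to the depth $d$ of the \emph{shallower} of the two nodes $x_v,x_w$ of the edge; this yields at most $\log_2 n$ classes, and I claim each class contributes at most $3\ell$ edges to $R$. Since $T$ is traversed in preorder, the subtrees rooted at two distinct depth-$d$ nodes occupy disjoint contiguous stretches of $\sigma$ inside each of the two layer blocks; together with (i) this gives: (a) two edges of $R$ internal to $V_i$ whose shallower endpoints lie in different depth-$d$ subtrees have disjoint $\sigma$-intervals, so $R$ contains at most one of them, and likewise internally to $V_{i+1}$; furthermore no edge internal to $V_i$ can nest with an edge internal to $V_{i+1}$, as the former lies entirely in the $V_i$ block and the latter entirely in the $V_{i+1}$ block; (b) two cross-layer edges of $R$ whose shallower endpoints lie in different depth-$d$ subtrees do not nest --- comparing the $\sigma$-order of their four endpoints shows that they cross --- and the same holds for two cross-layer edges whose shallower endpoints lie on opposite layers and in different subtrees. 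Consequently, within depth class $d$, $R$ has at most $\ell$ edges internal to a single layer (their distinct shallower endpoints all belong to one set $S_x\cap V_i$ or one set $S_x\cap V_{i+1}$) and at most $2\ell$ cross-layer edges (at most $\ell$ whose shallower endpoint is in $V_i$ and at most $\ell$ whose shallower endpoint is in $V_{i+1}$), so at most $3\ell$ in total. Summing over the at most $\log_2 n$ depth classes gives $|R|\le 3\ell\log_2 n$, so $G$ has a queue layout with at most $3\ell\log_2 n$ queues.

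The step I expect to be the real obstacle is the cross-layer part of the depth-class bound. A cross-layer edge runs from near the end of the $V_i$ block to near the start of the $V_{i+1}$ block, so its $\sigma$-interval covers a large portion of the order and could a priori nest with short edges in completely unrelated branches of $T$; the reason it does not is precisely that every layer block is ordered by a preorder traversal of the \emph{same} tree $T$, which keeps the depth-$d$ subtrees in disjoint blocks and makes the ``four endpoints'' crossing argument go through. Everything else is bookkeeping. The same construction, with twists (pairwise crossing edges) in place of rainbows, gives Theorem~\ref{th:main-sn}: there a twist can spread over three consecutive layers and the number of stacks forced by a fixed order is the chromatic number of the conflict graph of edges rather than just its clique number, which plausibly accounts for the larger constant $5\ell$ there.
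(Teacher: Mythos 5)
Your construction is essentially the paper's: the vertex order you define (layers in sequence, each layer sorted by preorder position in the separator-decomposition tree) coincides with the paper's recursively built $\sigma^q$, and your per-depth-class bound of $\ell$ intra-layer plus $2\ell$ inter-layer rainbow edges mirrors exactly the paper's assignment of $\ell$ intra-layer and $2\ell$ inter-layer queues per recursion level. The only difference is cosmetic: you count the largest rainbow and invoke the fixed-order duality (Mirsky/Heath--Rosenberg), whereas the paper explicitly partitions the edges into queues and verifies that no two edges in a queue nest; both verifications rest on the same case analysis, so the proposal is correct and follows the paper's approach.
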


\section{Proofs of Theorem \ref{th:main-sn} and Theorem \ref{th:main-qn}}

Let $G$ be a graph and $L=\{V_0,\dots,V_p\}$ be a layering of $G$ such that $G$ admits a layered $\ell$-separator for layering $L$. Each edge of $G$ is either an \emph{intra-layer} edge, that is, an edge between two vertices in a set $V_i$, or an \emph{inter-layer} edge, that is, an edge between a vertex in a set $V_i$ and a vertex in a set $V_{i+1}$. 

A total order on a set of vertices $R\subseteq V(G)$ is a \emph{vertex ordering} of $R$. The stack layout construction computes a vertex ordering $\sigma^s$ of $V(G)$ satisfying the {\em layer-by-layer} invariant, which is defined as follows: For $0\leq i < p$, the vertices in $V_i$ precede the vertices in $V_{i+1}$ in $\sigma^s$. Analogously, the queue layout construction computes a vertex ordering  $\sigma^q$ of $V(G)$ satisfying the layer-by-layer invariant. 

Let $S$ be a layered $\ell$-separator for $G$ with respect to $L$. Let $G_1,\dots,G_k$ be the graphs induced by the vertices in the connected components of $G-S$ (the vertices of $S$ do not belong to any graph $G_j$). These graphs are labeled  $G_1,\dots,G_k$ arbitrarily. Recall that, by the definition of a layered $\ell$-separator for $G$, we have $|V(G_j)|\leq n/2$, for each $1\leq j\leq k$. Let $S_i=S\cap V_i$ and let $\rho_i$ be an arbitrary vertex ordering of $S_i$, for $i=0,\dots,p$.

Both the stack and the queue layout constructions recursively construct vertex orderings of $V(G_j)$ satisfying the layer-by-layer invariant, for $j=1,\dots,k$. Let $\sigma^s_j$ be the vertex ordering of $V(G_j)$ computed by the stack layout construction; we also denote by $\sigma^s_{j,i}$ the restriction of $\sigma^s_j$ to the vertices in layer $V_i$. Note that $\sigma^s_j=\sigma^s_{j,1},\sigma^s_{j,2},\dots,\sigma^s_{j,p}$ by the layer-by-layer invariant. Vertex orderings $\sigma^q_j$ and $\sigma^q_{j,i}$ are defined analogously for the queue layout construction.

We now show how to combine the recursively constructed vertex orderings to obtain a vertex ordering of $V(G)$. The way this combination is performed differs for the stack layout construction and the queue layout construction.

{\bf Stack layout construction.} Vertex ordering $\sigma^s$ is defined as (refer to Fig.~\ref{fig:stack-layout}) 
\begin{eqnarray*}
& & \rho_0,\sigma^s_{1,0},\sigma^s_{2,0},\dots,\sigma^s_{k-1,0},\sigma^s_{k,0},\rho_1,\sigma^s_{k,1},\sigma^s_{k-1,1},\dots,\sigma^s_{2,1},\sigma^s_{1,1},\\
& &
\rho_2,\sigma^s_{1,2},\sigma^s_{2,2},\dots,\sigma^s_{k-1,2},\sigma^s_{k,2},\rho_3,\sigma^s_{k,3},\sigma^s_{k-1,3},\dots,\sigma^s_{2,3},\sigma^s_{1,3},\dots. 
\end{eqnarray*}

\begin{figure}[tb]
    \centering
	\includegraphics[width=0.99\textwidth]{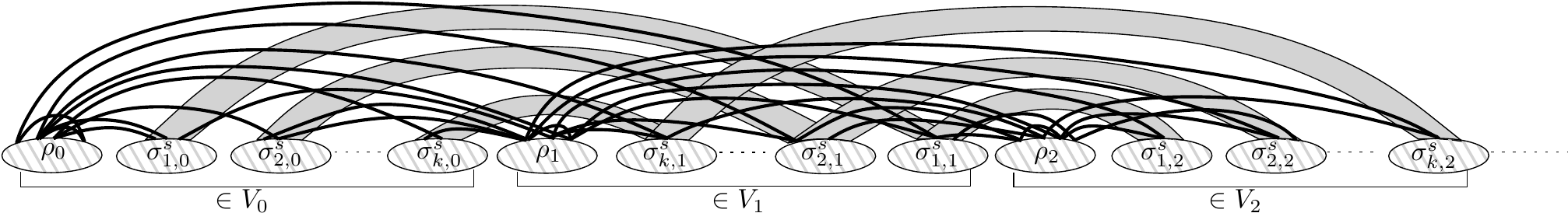}
    \caption{Illustration for the stack layout construction. Edges incident to vertices in $S$ are black and thick. Edges in graphs $G_1,\dots,G_k$ are represented by gray regions.}
\label{fig:stack-layout}
\end{figure}

The vertex ordering $\sigma^s$ satisfies the layer-by-layer invariant, given that vertex ordering $\sigma^s_j$ does, for $j=1,\dots,k$. Then Theorem~\ref{th:main-sn} is implied by the following.

\begin{lemma} \label{le:stacks}
$G$ has a stack layout with $5\ell \cdot \log_2 n$ stacks with vertex ordering~$\sigma^s$.
\end{lemma}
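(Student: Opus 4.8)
The plan is to bound the number of stacks by charging edges to a small number of "levels" of the recursion and showing that within each level, the edges can be partitioned into a constant (times $\ell$) number of crossing-free stacks. Since $|V(G_j)| \le n/2$ for every component, the recursion has depth at most $\log_2 n$, so if we can show that the edges "introduced" at a single recursion step (namely, the edges of $G$ with at least one endpoint in $S$) can be placed into at most $5\ell$ stacks, and that these stacks never conflict across different recursion levels, we are done. The key structural point is that $\sigma^s$ is built so that, at the top level, the blocks $\rho_i$ (the separator vertices in layer $V_i$) are interleaved with the blocks $\sigma^s_{j,i}$ in a "boustrophedon" (alternating left-to-right / right-to-left) pattern as $i$ increases; this is exactly what will make the separator-incident edges layout-friendly.

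The main steps, in order: First, observe that every edge of $G$ is either an intra-layer edge within some $V_i$ or an inter-layer edge between $V_i$ and $V_{i+1}$, and it either (a) has both endpoints inside a single component $G_j$, or (b) has at least one endpoint in $S$. Edges of type (a) are handled recursively by the stack layouts of the $G_j$'s; I must check that these recursive stacks can be reused at the parent level without creating crossings — this holds because in $\sigma^s$ the vertices of a fixed $G_j$ appear in the relative order $\sigma^s_j$ (the boustrophedon only reverses whole blocks between consecutive layers, and $\sigma^s_j$ already respects the layer-by-layer invariant, so reversing $\sigma^s_{j,i+1}$ relative to $\sigma^s_{j,i}$ does not change whether two edges of $G_j$ cross, since any edge of $G_j$ lies within one layer or between two consecutive layers). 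Second, I handle type-(b) edges. Split them by the layer structure: an edge incident to $S_i = S \cap V_i$ is either intra-layer in $V_i$, inter-layer into $V_{i-1}$, or inter-layer into $V_{i+1}$. For a fixed layer $V_i$, the block $\rho_i$ has at most $\ell$ vertices; I will argue that all edges with an endpoint in $\rho_i$ and the other endpoint in $V_{i-1} \cup V_i \cup V_{i+1}$ can be split among roughly $3\ell$ stacks by assigning one stack per vertex of $\rho_i$ per "direction", using the fact that in $\sigma^s$ each $\rho_i$ sits at one end of its layer's span and the neighbors of any single separator vertex within one layer, appearing consecutively-ish, can be put in one stack. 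Third, I will argue that stacks used for even-indexed layers can be shared with stacks used for odd-indexed layers (because the spans of different layers are disjoint intervals of $\sigma^s$), so the total over all layers collapses to a constant multiple of $\ell$ rather than something growing with $p$; combined with the depth-$\log_2 n$ recursion this yields $5\ell \log_2 n$.

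The hard part will be the careful crossing analysis for the type-(b) edges: I must verify that with the specific boustrophedon ordering, an edge from a separator vertex $s \in \rho_i$ to a vertex in $V_{i+1}$ does not cross an edge from a separator vertex in $\rho_{i+1}$ back to $V_i$, and that edges from the same separator vertex can indeed be bundled into a single stack given that the other endpoints lie in a block that appears contiguously. The alternation of block order between consecutive layers is precisely the device that prevents inter-layer separator edges from nesting badly into a crossing; making this rigorous — i.e., pinning down the exact interval of $\sigma^s$ occupied by each $\sigma^s_{j,i}$ and each $\rho_i$, and checking all the $O(1)$ cases of which pairs of edge-types can cross — is the bulk of the work. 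The accounting ($5\ell$ rather than, say, $4\ell$ or $6\ell$) will come out of exactly how many stacks each direction (down, within, up) and each parity of layer needs after the sharing argument; I expect roughly $\ell$ stacks for intra-layer separator edges, $2\ell$ for inter-layer separator edges going "forward", and a constant-times-$\ell$ slack for reconciling the two parities and the recursion, and these will need to be counted precisely to hit the stated bound of $5\ell$ per level.
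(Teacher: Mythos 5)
Your plan follows the paper's strategy (charge $5\ell$ stacks per recursion level, depth $\log_2 n$, separate intra- from inter-layer edges, bundle separator-incident edges by separator vertex, rely on the alternating block order), but as written it contains one step that is actually false and omits the argument that makes the per-level accounting work. The false step is your third one: you propose sharing the stacks of even-indexed layers with those of odd-indexed layers ``because the spans of different layers are disjoint intervals of $\sigma^s$''. That reasoning applies only to intra-layer edges; an inter-layer edge is not contained in one layer's interval, and an edge between $V_i$ and $V_{i+1}$ can cross an edge between $V_{i+1}$ and $V_{i+2}$. Concretely, take $u\in\rho_0$, $v\in\sigma^s_{1,1}$ (which lies at the very end of the interval of $V_1$, since the blocks of $V_1$ appear in reversed order), $x\in\rho_1$ and $y\in V_2$: then $u<_{\sigma^s}x<_{\sigma^s}v<_{\sigma^s}y$, so the separator-incident edges $uv$ and $xy$ cross, and under your sharing scheme they could land in the same stack. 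This is precisely why the paper uses disjoint stack sets for even and odd inter-layer edges, and it is where the constant comes from: $\ell$ (intra-layer) $+\,2\ell$ (even inter-layer) $+\,2\ell$ (odd inter-layer) $=5\ell$ per level; your closing ``slack for reconciling the two parities'' acknowledges the issue but does not resolve it.

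The second gap concerns the recursion accounting itself. To get $5\ell$ stacks per level you must let all sibling subgraphs $G_1,\dots,G_k$ (and, recursively, all subproblems at the same depth) reuse one common set of stacks, so you must prove that two edges lying in \emph{different} components $G_j$ and $G_{j'}$ that receive the same recursively-assigned stack never cross. Your type-(a) discussion only treats two edges of the same $G_j$, and its justification misdescribes the construction: no block $\sigma^s_{j,i}$ is internally reversed (only the left-to-right order of the blocks alternates with the parity of $i$), and internally reversing $\sigma^s_{j,i+1}$ relative to $\sigma^s_{j,i}$ \emph{would} change which inter-layer edges of $G_j$ cross, so the parenthetical claim is wrong even though the conclusion you need (the restriction of $\sigma^s$ to $V(G_j)$ equals $\sigma^s_j$) is true. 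The missing cross-component argument is exactly where the boustrophedon earns its keep: for $j\neq j'$, two intra-layer edges in the same layer are separated because $\sigma^s_{j,i}$ and $\sigma^s_{j',i}$ are disjoint intervals, and two inter-layer edges between $V_i$ and $V_{i+1}$ nest rather than cross because the component order in $V_{i+1}$ is the reverse of that in $V_i$. (For the top-level separator edges the reversal is not needed at all: two edges sharing the same separator vertex can never cross, so no ``consecutiveness'' of neighbours is required, and intra-layer separator edges need only $\ell$ stacks by assigning each such edge according to its earlier endpoint, which lies in $S$.) Since you explicitly defer the case analysis that would establish these facts, the proposal as it stands does not yet constitute a proof of the stated bound.
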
 

\begin{proof}
We use distinct sets of stacks for the intra- and the inter-layer edges. 

{\em Stacks for the intra-layer edges.} We assign each intra-layer edge $uv$ with $u\in S$ or $v\in S$ to one of $\ell$ stacks $P_1,\dots,P_\ell$ as follows. Since $uv$ is an intra-layer edge, $\{u,v\}\subseteq V_i$, for some $0\leq i\leq p$. Assume w.l.o.g. that $u<_{\sigma^s}v$. Then $u\in S$ and let it be $x$-th vertex in $\rho_i$ (recall that $\rho_i$ contains at most $\ell$ vertices). Assign $uv$ to $P_x$.  The only intra-layer edges that are not yet assigned to stacks belong to graphs $G_1,\dots,G_k$. The assignment of these edges to stacks is the one computed recursively; however, we use the same set of stacks to assign the edges of all graphs $G_1,\dots,G_k$. 

We now prove that no two intra-layer edges in the same stack cross. Let $e$ and $e'$ be two intra-layer edges of $G$ and let both the endpoints of $e$ be in $V_i$ and both the endpoints of $e'$ be in $V_{i'}$. Assume w.l.o.g. that $i\leq i'$. If $i<i'$, then, since $\sigma^s$ satisfies the layer-by-layer invariant, the endpoints of $e$ precede those of $e'$ in $\sigma^s$, hence $e$ and $e'$ do not cross. 
Suppose now that $i=i'$. 
If $e$ and $e'$ are in some stack $P_x$ for $x\in \{1, \dots, \ell\}$, then they are both incident to the $x$-th vertex in $\rho_i$, thus they do not cross. If $e$ and $e'$ are in some stack different from $P_1,\dots,P_\ell$, then $e\in E(G_j)$ and $e'\in E(G_{j'})$, for some $j,j'\in \{1,\dots,k\}$. If $j=j'$, then $e$ and $e'$ do not cross by induction. Otherwise, both the endpoints of $e$ precede both the endpoints of $e'$ or vice versa, since the vertices in $\sigma^s_{\min\{j,j'\},i}$ precede  those in $\sigma^s_{\max\{j,j'\},i}$ in $\sigma^s$ or vice versa, depending on whether $i$ is even or odd; hence $e$ and $e'$ do not cross.

We now bound the number of stacks we use for the intra-layer edges of $G$; we claim that this number is at most $\ell \cdot \log_2 n$. The proof is by induction on $n$; the base case $n=1$ is trivial. For any subgraph $H$ of $G$, let $p_1(H)$ be the number of stacks we use for the intra-layer edges of $H$, and let $p_1(n')=\max_H\{p_1(H)\}$ over all subgraphs $H$ of $G$ with $n'$ vertices. As proved above, $p_1(G)\leq \ell + \max\{p_1(G_1),\dots,p_1(G_k)\}$. Since each graph $G_j$ has at most $n/2$ vertices, we get that $p_1(G)\leq \ell + p_1(n/2)$. By induction $p_1(G) \leq \ell + \ell \cdot \log_2 (n/2) = \ell \cdot \log_2 n$.

{\em Stacks for the inter-layer edges.} We use distinct sets of stacks for the {\em even inter-layer edges} -- connecting vertices on layers $V_i$ and $V_{i+1}$ with $i$ even -- and for the {\em odd inter-layer edges} -- connecting vertices on layers $V_i$ and $V_{i+1}$ with $i$ odd. We only describe how to assign the even inter-layer edges to $2\ell \cdot \log_2 n$ stacks so that no two edges in the same stack cross; the assignment for the odd inter-layer edges is analogous.

We assign each even inter-layer edge $uv$ with $u\in S$ or $v\in S$ to one of $2\ell$ stacks $P'_1,\dots,P'_{2\ell}$ as follows. Since $uv$ is an inter-layer edge, $u$ and $v$ respectively belong to layers $V_i$ and $V_{i+1}$, for some $0\leq i\leq p-1$. If $u\in S$, then $u$ is the $x$-th vertex in $\rho_i$, for some $1\leq x \leq \ell$; assign edge $uv$ to $P'_x$. If $u\notin S$, then $v\in S$ is the $y$-th vertex in $\rho_{i+1}$, for some $1\leq y \leq \ell$; assign edge $uv$ to $P'_{\ell + y}$. The only even inter-layer edges that are not yet assigned to stacks belong to graphs $G_1,\dots,G_k$. The assignment of these edges to stacks is the one computed recursively; however, we use the same set of stacks to assign the edges of all graphs $G_1,\dots,G_k$. 

We prove that no two even inter-layer edges in the same stack cross. Let $e$ and $e'$ be two even inter-layer edges of $G$. Let $V_i$ and $V_{i+1}$ be the layers containing the endpoints of $e$.  Let $V_{i'}$ and $V_{i'+1}$ be the layers containing the endpoints of $e'$. Assume w.l.o.g. that $i\leq i'$.  If $i<i'$, then $i+1<i'$, given that both $i$ and $i'$ are even. Then, since $\sigma^s$ satisfies the layer-by-layer invariant, both the endpoints of $e$ precede both the endpoints of $e'$, thus $e$ and $e'$ do not cross. Suppose now that $i=i'$. If $e$ and $e'$ are in some stack $P'_h$ for $h\in\{1, \dots, 2\ell\}$, then $e$ and $e'$ are both incident either to the $h$-th vertex of $\rho_i$ or to the $(h-\ell)$-th vertex of $\rho_{i+1}$, hence they do not cross. If $e$ and $e'$ are in some stack different from $P'_1,\dots,P'_{2\ell}$, then $e\in E(G_j)$ and $e'\in E(G_{j'})$, for $j,j'\in \{1,\dots,k\}$. If $j=j'$, then $e$ and $e'$ do not cross by induction. Otherwise, $j\not= j'$ and then $e$ nests inside $e'$ or vice versa, since the vertices in  $\sigma^s_{\min\{j,j'\},i}$ precede those in  $\sigma^s_{\max\{j,j'\},i}$ and the vertices in $\sigma^s_{\max\{j,j'\},i+1}$ precede those in  $\sigma^s_{\min\{j,j'\},i+1}$ in $\sigma^s$; hence $e$ and $e'$ do not cross.

We now bound the number of stacks we use for the even inter-layer edges of $G$; we claim that this number is at most $2\ell \cdot \log_2 n$. The proof is by induction on $n$; the base case $n=1$ is trivial. For any subgraph $H$ of $G$, let $p_2(H)$ be the number of stacks we use for the even inter-layer edges of $H$, and let $p_2(n')=\max_H\{p_2(H)\}$ over all subgraphs $H$ of $G$ with $n'$ vertices. As proved above, $p_2(G)\leq 2\ell + \max\{p_2(G_1),\dots,p_2(G_k)\}$. Since each graph $G_j$ has at most $n/2$ vertices, we get that $p_2(G)\leq 2\ell + p_2(n/2)$. By induction $p_2(G) \leq 2\ell + 2\ell \cdot \log_2 (n/2) = 2\ell \cdot \log_2 n$.

The described stack layout uses $\ell \cdot \log_2 n$ stacks for the intra-layer edges, $2\ell \cdot \log_2 n$ stacks for the even inter-layer edges, and $2\ell \cdot \log_2 n$ stacks for the odd inter-layer edges, thus $5\ell \cdot \log_2 n$ stacks in total. This concludes the proof.
\end{proof}

{\bf Queue layout construction.} Vertex ordering $\sigma^q$ is defined as (refer to Fig.~\ref{fig:queue-layout}) $\rho_0,\sigma^q_{1,0},\sigma^q_{2,0},\dots,\sigma^q_{k,0},\rho_1,\sigma^q_{1,1},\sigma^q_{2,1},\dots,\sigma^q_{k,1},\dots,\rho_p,\sigma^q_{1,p},\sigma^q_{2,p},\dots,\sigma^q_{k,p}$.

\begin{figure}[tb]
    \centering
	\includegraphics[width=0.99\textwidth]{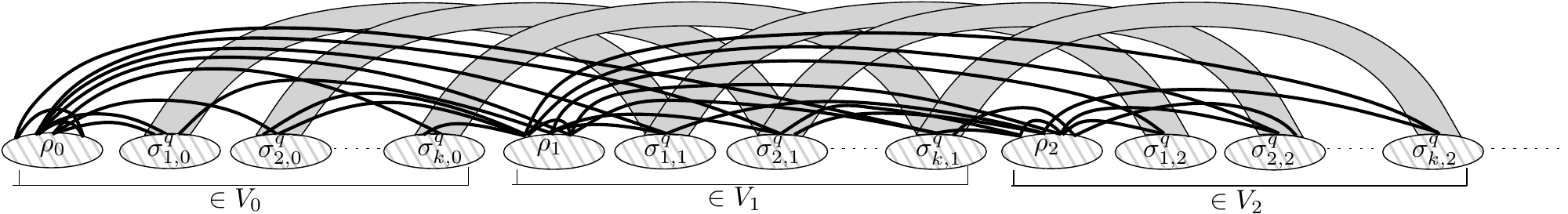}
    \caption{Illustration for the queue layout construction.}
\label{fig:queue-layout}
\end{figure}

The vertex ordering $\sigma^q$ satisfies the layer-by-layer invariant, given that vertex ordering $\sigma^q_j$ does, for $j=1,\dots,k$. Then Theorem~\ref{th:main-qn} is implied by the following.

\begin{lemma} \label{le:queues}
$G$ has a queue layout with $3\ell \cdot \log_2 n$ queues with vertex ordering~$\sigma^q$.
\end{lemma}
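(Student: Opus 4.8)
The plan is to mirror the structure of the proof of Lemma~\ref{le:stacks}, again splitting the edge set into intra-layer edges and inter-layer edges and handling each class with its own set of queues, but now exploiting the specific shape of $\sigma^q$, which lists the layers in increasing order and, within each layer $V_i$, puts $\rho_i$ first and then the blocks $\sigma^q_{j,i}$ in the fixed order $j=1,2,\dots,k$ (the same order in every layer, unlike the ``boustrophedon'' order used for stacks). First I would bound the queues needed for intra-layer edges incident to $S$: assign such an edge $uv$ with $u\in S$ the $x$-th vertex of $\rho_i$ to queue $Q_x$, giving $\ell$ queues; edges lying entirely inside some $G_j$ are handled recursively on the same set of queues for all $j$. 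To see no two edges in a common queue nest, take intra-layer edges $e\subseteq V_i$, $e'\subseteq V_{i'}$ with $i\le i'$: if $i<i'$ the layer-by-layer invariant separates them so neither nests in the other; if $i=i'$ and both are in some $Q_x$ they share the $x$-th vertex of $\rho_i$ (a common endpoint, no nesting), and if both lie in the recursive queues then either they are in the same $G_j$ (done by induction) or in different $G_j,G_{j'}$, in which case the blocks $\sigma^q_{j,i}$ and $\sigma^q_{j',i}$ are disjoint intervals, so the endpoints of one precede those of the other and again there is no nesting. The recurrence $q_1(n)\le \ell + q_1(n/2)$ yields $\ell\cdot\log_2 n$ queues.

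Next I would handle the inter-layer edges. The key point is that, because $\sigma^q$ orders the blocks in the same direction on consecutive layers, an inter-layer edge of $G$ not inside any $G_j$ — hence incident to a vertex of $S$ — cannot nest with another such edge once we group by which $\rho_i$-slot the $S$-endpoint occupies; this time a single set of $2\ell$ queues suffices (no parity split is needed, unlike in the stack case). Concretely, for an inter-layer edge $uv$ with $u\in V_i$, $v\in V_{i+1}$: if $u\in S$ is the $x$-th vertex of $\rho_i$, assign $uv$ to $Q'_x$; otherwise $v\in S$ is the $y$-th vertex of $\rho_{i+1}$, assign $uv$ to $Q'_{\ell+y}$. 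Edges inside some $G_j$ go to a shared recursive set of queues. For the non-nesting check: inter-layer edges $e$ on layers $V_i,V_{i+1}$ and $e'$ on $V_{i'},V_{i'+1}$ with $i\le i'$; if $i'\ge i+2$ the endpoints are separated by the invariant; if $i'=i+1$, then $e$'s upper endpoint is in $V_{i+1}$ and $e'$'s lower endpoint is in $V_{i+1}$ while $e'$'s upper endpoint is in $V_{i+2}$ strictly after everything on $V_{i+1}$, so $e$ cannot nest inside $e'$, and $e'$ cannot nest inside $e$ since $e'$ reaches into $V_{i+2}$ past $e$'s far endpoint — so I must check both nesting directions carefully here, which is the one slightly delicate spot. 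If $i=i'$: if both are in a common $Q'_h$ they share an endpoint (the $h$-th vertex of $\rho_i$ or the $(h-\ell)$-th of $\rho_{i+1}$) and do not nest; if both are recursive, either same $G_j$ (induction) or different $G_j,G_{j'}$, in which case the blocks on layer $V_i$ appear in the same order as the blocks on layer $V_{i+1}$ (both $j$ before $j'$ or both after), so the two edges are ``parallel'' and neither nests inside the other.

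The recurrence here is $q_2(n)\le 2\ell + q_2(n/2)$, giving $2\ell\cdot\log_2 n$ queues for inter-layer edges, and together with the $\ell\cdot\log_2 n$ intra-layer queues we reach the claimed $3\ell\cdot\log_2 n$. The main obstacle — and the only place the argument genuinely differs from the stack case — is verifying the $i'=i+1$ sub-case for inter-layer edges: one has to confirm that the uniform (non-reversed) block order in $\sigma^q$ rules out nesting between an edge spanning $V_i$–$V_{i+1}$ and one spanning $V_{i+1}$–$V_{i+2}$ that share no endpoint, which comes down to the observation that the entire interval $V_{i+1}$ lies between the two endpoints of neither edge in a way that would create a nesting. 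Everything else is a routine adaptation of Lemma~\ref{le:stacks}, so I would keep those parts terse and refer back to that proof where the reasoning is identical.
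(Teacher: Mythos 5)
Your proposal is correct and follows essentially the same route as the paper's proof: the same split into intra- and inter-layer edges, the same assignment of $S$-incident edges to $\ell$ (resp.\ $2\ell$) queues by their endpoint's position in $\rho_i$, shared recursive queues for the graphs $G_j$, and the recurrences giving $\ell\cdot\log_2 n + 2\ell\cdot\log_2 n = 3\ell\cdot\log_2 n$ queues. The only nit is in your $i'=i+1$ sub-case (which the paper folds into a uniform $i<i'$ argument): the reason $e$ cannot nest inside $e'$ is that $e$'s endpoint in $V_i$ precedes $e'$'s endpoint in $V_{i+1}$ by the layer-by-layer invariant, not that $e'$'s upper endpoint lies beyond $V_{i+1}$ (that fact yields the other direction), but this is an immediately repairable misattribution rather than a gap.
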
 

\begin{proof}
We use distinct sets of queues for the intra- and the inter-layer edges. 

{\em Queues for the intra-layer edges.} We assign each intra-layer edge $uv$ with $u\in S$ or $v\in S$ to one of $\ell$ queues $Q_1,\dots,Q_\ell$ as follows. Since $uv$ is an intra-layer edge, $\{u,v\}\subseteq V_i$, for some $0\leq i\leq p$. Assume w.l.o.g. that $u<_{\sigma^q} v$. Then $u\in S$ and let it be the $x$-th vertex of $\rho_i$. Assign $uv$ to $Q_x$. The only intra-layer edges that are not yet assigned to queues belong to graphs $G_1,\dots,G_k$. The assignment of these edges to queues is the one computed recursively; however, we use the same set of queues to assign the edges of all graphs $G_1,\dots,G_k$. 

The proof that no two intra-layer edges in the same queue nest is the same as the proof no two intra-layer edges in the same stack cross in Lemma~\ref{le:stacks} (with the word ``nest'' replacing ``cross'' and with $\sigma^q$ replacing $\sigma^s$). The proof that the number of queues we use for the intra-layer edges is at most $\ell \cdot \log_2 n$ is also the same as the proof that the number of stacks we use for the intra-layer edges is at most $\ell \cdot \log_2 n$ in Lemma~\ref{le:stacks}.

{\em Queues for the inter-layer edges.} We assign each inter-layer edge $uv$ with $u\in S$ or $v\in S$ to one of $2\ell$ queues $Q'_1,\dots,Q'_{2\ell}$ as follows. Since $uv$ is an inter-layer edge, $u$ and $v$ respectively belong to layers $V_i$ and $V_{i+1}$, for some $0\leq i\leq p-1$. If $u\in S$, then $u$ is the $x$-th vertex in $\rho_i$, for some $1\leq x \leq \ell$; assign edge $uv$ to $Q'_x$. If $u\notin S$, then $v\in S$ is the $y$-th vertex in $\rho_{i+1}$, for some $1\leq y \leq \ell$; assign edge $uv$ to $Q'_{\ell + y}$. The only inter-layer edges that are not yet assigned to queues belong to graphs $G_1,\dots,G_k$. The assignment of these edges to queues is the one computed recursively; however, we use the same set of queues to assign the edges of all graphs $G_1,\dots,G_k$.

We prove that no two inter-layer edges $e$ and $e'$ in the same queue nest. Let $V_i$ and $V_{i+1}$ be the layers containing the endpoints of $e$.  Let $V_{i'}$ and $V_{i'+1}$ be the layers containing the endpoints of $e'$. Assume w.l.o.g. that $i\leq i'$.  If $i<i'$, then both endpoints of $e$ precede the endpoint of $e'$ in $V_{i'+1}$ (hence $e'$ is not nested inside $e$) and both endpoints of $e'$ follow the endpoint of $e$ in $V_{i}$ (hence $e$ is not nested inside $e'$), since $\sigma^q$ satisfies the layer-by-layer invariant; thus $e$ and $e'$ do not nest.
Suppose now that $i=i'$. 
If $e$ and $e'$ are in some queue $Q'_h$ for $h\in\{1, \dots, 2\ell\}$, then $e$ and $e'$ are both incident either to the $h$-th vertex of $\rho_i$ or to the $(h-\ell)$-th vertex of $\rho_{i+1}$, hence they do not nest. 
If $e$ and $e'$ are in some queue different from $Q'_1,\dots,Q'_{2\ell}$, then $e\in E(G_j)$ and $e'\in E(G_{j'})$, for $j,j'\in \{1,\dots,k\}$. If $j=j'$, then $e$ and $e'$ do not nest by induction. Otherwise, $j\not= j'$ and then the endpoints of $e$ alternate with those of $e'$ in $\sigma^q$, since the vertices in  $\sigma^q_{\min\{j,j'\},i}$ precede those in  $\sigma^q_{\max\{j,j'\},i}$ and the vertices in  $\sigma^q_{\min\{j,j'\},i+1}$ precede those in  $\sigma^q_{\max\{j,j'\},i+1}$ in $\sigma^q$; hence $e$ and $e'$ do not nest.

We now bound the number of queues we use for the inter-layer edges of $G$; we claim that this number is at most $2\ell \cdot \log_2 n$. The proof is by induction on $n$; the base case $n=1$ is trivial. For any subgraph $H$ of $G$, let $q(H)$ be the number of queues we use for the inter-layer edges of $H$, and let $q(n')=\max_H\{q(H)\}$ over all subgraphs $H$ of $G$ with $n'$ vertices. As proved above, $q(G)\leq 2\ell + \max\{q(G_1),\dots,q(G_k)\}$. Since each graph $G_j$ has at most $n/2$ vertices, we get that $q(G)\leq 2\ell + q(n/2)$. By induction $q(G) \leq 2\ell + 2\ell \cdot \log_2 (n/2) = 2\ell \cdot \log_2 n$.

Thus, the described queue layout uses $\ell \cdot \log_2 n$ queues for the intra-layer edges and $2\ell \cdot \log_2 n$ queues for the inter-layer edges, thus $3\ell \cdot \log_2 n$ queues in total. This concludes the proof.
\end{proof}

\noindent{\bf Acknowledgments:} The authors wish to thank David R. Wood for stimulating discussions and comments on the preliminary version of this article.

\bibliographystyle{splncs03}
\bibliography{bibliography}

\begin{thebibliography}{10}
\providecommand{\url}[1]{\texttt{#1}}
\providecommand{\urlprefix}{URL }

\bibitem{oneplanar-2}
Alam, M.J., Brandenburg, F.J., Kobourov, S.G.: On the book thickness of
  1-planar graphs (2015), http://arxiv.org/abs/1510.05891

\bibitem{layer-pw}
Bannister, M.J., Devanny, W.E., Dujmovi\'c, V., Eppstein, D., Wood, D.R.: Track
  layouts, layered path decompositions, and leveled planarity (2015),
  http://arxiv.org/abs/1506.09145

\bibitem{DBLP:conf/esa/BekosB0R15}
Bekos, M.A., Bruckdorfer, T., Kaufmann, M., Raftopoulou, C.N.: 1-planar graphs
  have constant book thickness. In: Bansal, N., Finocchi, I. (eds.) 23rd Annual
  European Symposium on Algorithms. LNCS, vol. 9294, pp. 130--141. Springer
  (2015)

\bibitem{Blankenship-PhD03}
Blankenship, R.: Book Embeddings of Graphs. Ph.D. thesis, Dept.\ Math.\
  Louisiana St.\ Univ., U.S.A. (2003)

\bibitem{BO01}
Blankenship, R., Oporowski, B.: Book embeddings of graphs and minor-closed
  classes. In: 32nd Southeastern International Conference on Combinatorics,
  Graph Theory and Computing. Dept.\ Math.\ Louisiana St.\ Univ. (2001)

\bibitem{Chen01}
Chen, Z.Z.: Approximation algorithms for independent sets in map graphs. J.
  Algorithms  41(1),  20--40 (2001)

\bibitem{Chen-JGT07}
Chen, Z.Z.: New bounds on the edge number of a {$k$}-map graph. J. Graph Theory
   55(4),  267--290 (2007)

\bibitem{CGP02}
Chen, Z.Z., Grigni, M., Papadimitriou, C.H.: Map graphs. J. ACM  49(2),
  127--138 (2002)

\bibitem{DFHT05}
Demaine, E.D., Fomin, F.V., Hajiaghayi, M., Thilikos, D.M.: Fixed-parameter
  algorithms for {$(k,r)$}-center in planar graphs and map graphs. ACM Trans.
  Alg.  1(1),  33--47 (2005)

\bibitem{SCICOMP-BattistaFP13}
{Di Battista}, G., Frati, F., Pach, J.: On the queue number of planar graphs.
  SIAM J. Comput.  42(6),  2243--2285 (2013)

\bibitem{Duj15}
Dujmovi{\'c}, V.: Graph layouts via layered separators. J. Combin. Th. Ser. B
  110,  79--89 (2015)

\bibitem{DBLP:conf/gd/DujmovicEW15}
Dujmovi\'c, V., Eppstein, D., Wood, D.R.: Genus, treewidth, and local crossing
  number. In: {Di Giacomo}, E., Lubiw, A. (eds.) 23rd International Symposium
  on Graph Drawing and Network Visualization. LNCS, vol. 9411, pp. 87--98.
  Springer (2015)

\bibitem{DBLP:journals/combinatorics/DujmovicJFW13}
Dujmovi\'c, V., Joret, G., Frati, F., Wood, D.R.: Nonrepetitive colourings of
  planar graphs with {O}(log n) colours. Electr. J. Comb.  20(1),  P51 (2013)

\bibitem{DMW-SJC05}
Dujmovi{\'c}, V., Morin, P., Wood, D.R.: Layout of graphs with bounded
  tree-width. {SIAM} J. Comput.  34(3),  553--579 (2005)

\bibitem{DBLP:conf/focs/DujmovicMW13}
Dujmovi\'c, V., Morin, P., Wood, D.R.: Layered separators for queue layouts,
  3{D} graph drawing and nonrepetitive coloring. In: 54th Annual {IEEE}
  Symposium on Foundations of Computer Science. pp. 280--289. {IEEE} Computer
  Society (2013)

\bibitem{DMW-2014}
Dujmovi\'c, V., Morin, P., Wood, D.R.: Layered separators in minor-closed
  families with applications (2014), http://arxiv.org/abs/1306.1595

\bibitem{DujWoo-DMTCS04}
Dujmovi{\'c}, V., Wood, D.R.: On linear layouts of graphs. Discrete Math.
  Theor. Comput. Sci.  6(2),  339--358 (2004)

\bibitem{FLS-SODA12}
Fomin, F.V., Lokshtanov, D., Saurabh, S.: Bidimensionality and geometric
  graphs. In: Rabani, Y. (ed.) 23rd Annual {ACM-SIAM} Symposium on Discrete
  Algorithms. pp. 1563--1575. {SIAM} (2012)

\bibitem{LT-SJAM79}
Lipton, R.J., Tarjan, R.E.: A separator theorem for planar graphs. SIAM J.
  Appl. Math.  36(2),  177--189 (1979)

\bibitem{Malitz94b}
Malitz, S.M.: Genus $g$ graphs have pagenumber ${O}(\sqrt g)$. J. Algorithms
  17(1),  85--109 (1994)

\bibitem{Malitz94a}
Malitz, S.M.: Graphs with ${E}$ edges have pagenumber ${O}(\sqrt {E})$. J.
  Algor.  17(1),  71--84 (1994)

\bibitem{Ollmann73}
Ollmann, L.T.: On the book thicknesses of various graphs. In: Hoffman, F.,
  Levow, R.B., Thomas, R.S.D. (eds.) 4th Southeastern Conference on
  Combinatorics, Graph Theory, and Computing. Congr. Numer., vol. {VIII}, p.
  459 (1973)

\bibitem{PachToth-Comb97}
Pach, J., T{\'o}th, G.: Graphs drawn with few crossings per edge. Combinatorica
   17(3),  427--439 (1997)

\bibitem{Yannakakis89}
Yannakakis, M.: Embedding planar graphs in four pages. J.\ Comput.\ Sys. Sci.
  38(1),  36--67 (1989)

\end{thebibliography}

\end{document}